\documentclass[3p]{elsarticle}
\usepackage{amsmath}
\usepackage{amssymb}
\usepackage{amsfonts}
\usepackage{subfigure}
\usepackage{graphicx}
\usepackage[british]{babel}

\newtheorem{observation}{Observation}

\newtheorem{theorem}{Theorem}
\newproof{proof}{Proof}

\begin{document}

\raggedbottom

\title{Locating a tree in a phylogenetic network}
\author[uoc]{Leo van Iersel\fnref{fn1}}
\ead{l.j.j.v.iersel@gmail.com}
\author[uoc]{Charles Semple\corref{cor1}\fnref{fn1}}
\ead{c.semple@math.canterbury.ac.nz}
\author[uoc]{Mike Steel\fnref{fn1}}
\ead{m.steel@math.canterbury.ac.nz}
\cortext[cor1]{Corresponding author}
\fntext[fn1]{We thank the Allan Wilson Centre for Molecular Ecology and Evolution and the New Zealand Marsden Fund for funding.}
\address[uoc]{Department of Mathematics and Statistics, University of Canterbury\\Private Bag 4800, Christchurch, New Zealand}

\begin{abstract} \emph{Phylogenetic trees} and \emph{networks} are leaf-labelled graphs that are used to describe evolutionary histories of species. The \textsc{Tree Containment} problem asks whether a given phylogenetic tree is embedded in a given phylogenetic network. Given a phylogenetic network and a cluster of species, the \textsc{Cluster Containment} problem asks whether the given cluster is a cluster of some phylogenetic tree embedded in the network. Both problems are known to be NP-complete in general. In this article, we consider the restriction of these problems to several well-studied classes of phylogenetic networks. We show that \textsc{Tree Containment} is polynomial-time solvable for normal networks, for binary tree-child networks, and for level-$k$ networks. On the other hand, we show that, even for tree-sibling, time-consistent, regular networks, both \textsc{Tree Containment} and \textsc{Cluster Containment} remain NP-complete.\end{abstract}

\begin{keyword}
algorithms \sep computational complexity \sep phylogenetic trees \sep phylogenetic networks
\end{keyword}

\maketitle

\section{Introduction}

Rooted trees, and more generally digraphs, are widely used to represent evolutionary relationships in biology~\cite{huson06, HusonEtAl2010, Nak10}. Such a digraph typically has a set~$\mathcal{X}$ of labelled leaves (vertices of outdegree 0) which corresponds to the collection of present-day species under study. The arcs in the digraph are directed away from a single `root' vertex which represents the evolutionary ancestor of the species in~$\mathcal{X}$. The remaining vertices of the digraph are usually unlabelled; they represent hypothetical ancestral species, and an arc from~$u$ to~$v$ indicates that ancestral species $u$ contributes directly to the genetic makeup of~$v$.

The simplest phylogenetic networks are trees, and these have traditionally been used  when evolution is described purely by the formation of each new species from an existing one. However, processes of reticulate evolution (in particular the formation of hybrid species, and lateral gene transfer) mean that a new species can have genetic contributions from more than one ancestral species, which is the reason digraphs are increasingly seen as a desirable model in molecular systematics~\cite{doo07}. More precisely, if one considers the evolution of a particular gene, its evolution will generally be described by a tree, but reticulate processes mean that different genes can have different evolutionary tree structures,  that can only be adequately reconciled by fitting them into a phylogenetic network.

This raises a fundamental question in computational phylogenetics: given a tree and a phylogenetic network, both with leaf set~$\mathcal{X}$, is there an efficient algorithm to determine whether the tree `fits inside' the network (in a sense we define precisely below)? In general this question is NP-hard, even for certain restricted classes of phylogenetic networks. However, we show that, for particular classes, there exist polynomial-time algorithms which also provide a recipe for finding an explicit embedding of the tree in the phylogenetic network. In addition, we consider the computational complexity of deciding whether a subset of~$\mathcal{X}$ is a cluster of some tree that sits inside a given network.

The structure of this article is as follows. After giving formal definitions in the next section, we present our polynomial-time algorithms for the `tree containment' problem in Section~\ref{sec:polytime}, discuss the `cluster containment' problem in Section~\ref{sec:cluster}, show the computational intractability of both problems for certain classes of networks in Section~\ref{sec:hard}, and finish with an open problem in Section~\ref{sec:open}.

\section{Definitions}

Consider a set~$\mathcal{X}$ of taxa. A \emph{rooted phylogenetic network} (\emph{network} for short) on~$\mathcal{X}$ is a directed acyclic (simple) graph with a single \emph{root} (vertex with indegree~0), \emph{leaves} (vertices with outdegree~0) bijectively labeled by~$\mathcal{X}$ and no vertices with indegree and outdegree one. We identify each leaf with its label and refer to the directed edges (arcs) as \emph{edges}.

A vertex is called a \emph{reticulation vertex} (\emph{reticulation} for short) if it has indegree at least two and a \emph{tree-vertex} otherwise. An edge is called a \emph{tree-edge} if it ends in a tree-vertex and it is called a \emph{reticulation-edge} otherwise. A \emph{tree-path} is a directed path that contains only tree-edges. A network with no reticulations is said to be a \emph{rooted phylogenetic tree} (\emph{tree} for short). A network is said to be \emph{binary} if each reticulation has indegree two and outdegree one and all of the vertices have outdegree at most two.

Given two vertices~$v_1,v_2$ (of some tree or network), we use~$v_1\leq v_2$ to denote that there is a directed path from~$v_1$ to~$v_2$. If~$v_1\leq v_2$ and $v_1\neq v_2$, we write~$v_1 < v_2$. For a vertex~$v$, we define $cl(v)=\{x\in\mathcal{X}:v\leq x\}$. If~$(u,v)$ is an edge, then we say that~$u$ is a \emph{parent} of~$v$ and~$v$ is a \emph{child} of~$u$. If two vertices~$v_1,v_2$ have a common parent, then they are said to be \emph{siblings}. An edge~$e$ of a network~$N$ is a \emph{cut-edge} if removal of~$e$ disconnects the undirected graph underlying~$N$. A cut-edge is \emph{trivial} if it ends in a leaf.

Given a tree~$T$ and network~$N$, we say that~$N$ \emph{displays}~$T$ if there is a subgraph~$T'$ of~$N$ that is a subdivision of~$T$ (i.e.~$T'$ can be obtained from~$T$ by replacing edges by directed paths). For a tree~$T$ on~$\mathcal{X}$ and a subset~$X'\subseteq\mathcal{X}$, we use~$T|X'$ to denote the tree on~$X'$ that is displayed by~$T$.

These concepts are illustrated in Fig.~\ref{fig:leo}.
\begin{figure}[ht]
  \centering
  \includegraphics[scale=.6]{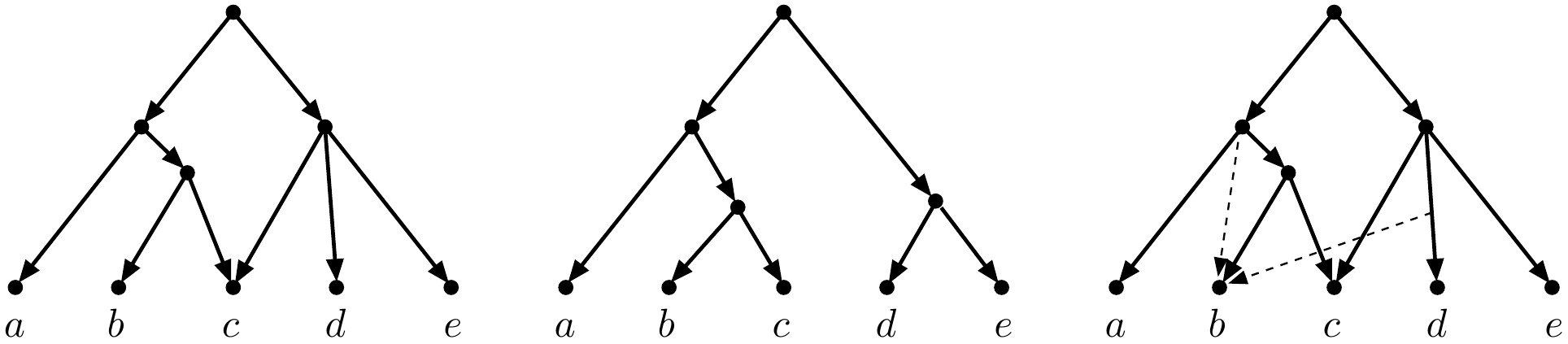}
  \caption{{\em Left}: A rooted phylogenetic network on $\mathcal{X}=\{a,b,c,d,e\}$.  This network is normal, but not binary and has one reticulation vertex.  {\em Middle}: One of the two rooted phylogenetic trees on $\mathcal{X}=\{a,b,c,d,e\}$ that are displayed by this network. {\em Right}: If we add either of the dashed edges, the resulting network is no longer normal; adding the steeper, left-most arrow violates condition (R3) while adding the
  right-most arrow results in a network that violates the tree-child property for the parent of~$b$ and~$c$. }
  \label{fig:leo}
\end{figure}

In the first part of this paper we consider the following fundamental decision problem:

\medskip
\noindent\begin{tabular}{lp{.85\textwidth}}
\multicolumn{2}{l}{\textsc{Tree Containment}} \\
\textit{Instance:} & A set~$\mathcal{X}$ of taxa, a rooted phylogenetic network~$N$ on~$\mathcal{X}$ and a rooted phylogenetic tree~$T$ on~$\mathcal{X}$.\\
\textit{Question:} & Does~$N$ display~$T$? \\
\end{tabular}
\medskip

This problem is known to be NP-complete~\cite{KanjEtAl2008} for the general class of rooted, binary phylogenetic networks.
Here we study the restriction of this problem to some established classes of rooted phylogenetic networks. We then consider the
complexity of a related `cluster containment' problem.

\subsection{Classes of Phylogenetic Networks}
A phylogenetic network $N$ is said to be \emph{regular} if for any two distinct vertices~$u,v$ of~$N$
\begin{enumerate}
\item[(R1)] $cl(u) \neq cl(v)$.
\item[(R2)] $u\leq v$ if and only if~$cl(u)\supset cl(v)$;
\item[(R3)] there is no edge~$(u,v)$ if there is also a directed path from~$u$ to~$v$ of length greater than one.
\end{enumerate}

A phylogenetic network~$N$ is said to be \emph{tree-child} (see e.g. \cite{CardonaEtAl2009}) if each vertex of~$N$ either is a leaf or has a child that is a tree-vertex. It follows immediately that from each vertex of a tree-child network there exists a tree-path to a leaf. A network is said to be \emph{normal} if it is tree-child, has no vertices of outdegree 1, and in addition condition (R3) above holds. It has been shown that any normal network is regular, i.e. that automatically conditions (R1) and (R2) hold~\cite{Willson2009b}.

\section{Polynomial-Time Algorithms}\label{sec:polytime}

This section describes our main results, polynomial-time algorithms for \textsc{Tree Containment} restricted to normal networks and to binary tree-child networks. We first give an algorithm for normal networks in Section~\ref{sec:normal}, because this algorithm is simpler and will be used as a subroutine for the algorithm for binary tree-child networks in Section~\ref{sec:treechild}.

\subsection{Normal Networks}\label{sec:normal}

We show that, given a normal phylogenetic network~$N$ and a phylogenetic tree~$T$, one can decide in polynomial time whether~$N$ displays~$T$. We propose the following algorithm. An example is in Figure~\ref{fig:normalalg}.

\medskip
\noindent\textbf{Algorithm} \textsc{LocateNormal}\\
For each reticulation~$r$ of~$N$, do the following.
\begin{enumerate}
\item Find a leaf~$x_r$ that can be reached from~$r$ by a tree-path.
\item Construct a set~$X'$ consisting of~$x_r$ and, for each parent~$p$ of~$r$, a leaf that can be reached from~$p$ by a tree-path.
\item If~$x_r$ does not have a sibling in~$T|X'$, report that~$N$ does not display~$T$.
\item If~$x_r$ does have a sibling in~$T|X'$, let~$x_s$ be such a sibling and~$p_s$ the reticulation from which~$x_s$ can be reached by a tree-path. Delete all edges entering~$r$ except for $(p_s,r)$.
\end{enumerate}
When there are no reticulations left, the network has been transformed into a phylogenetic tree~$T'$. Check if~$T'$ is a subdivision of~$T$. If it is, report that~$N$ displays~$T$. Otherwise, report that~$N$ does not display~$T$.

\begin{figure}[ht]
  \centering
  \includegraphics[scale=0.5]{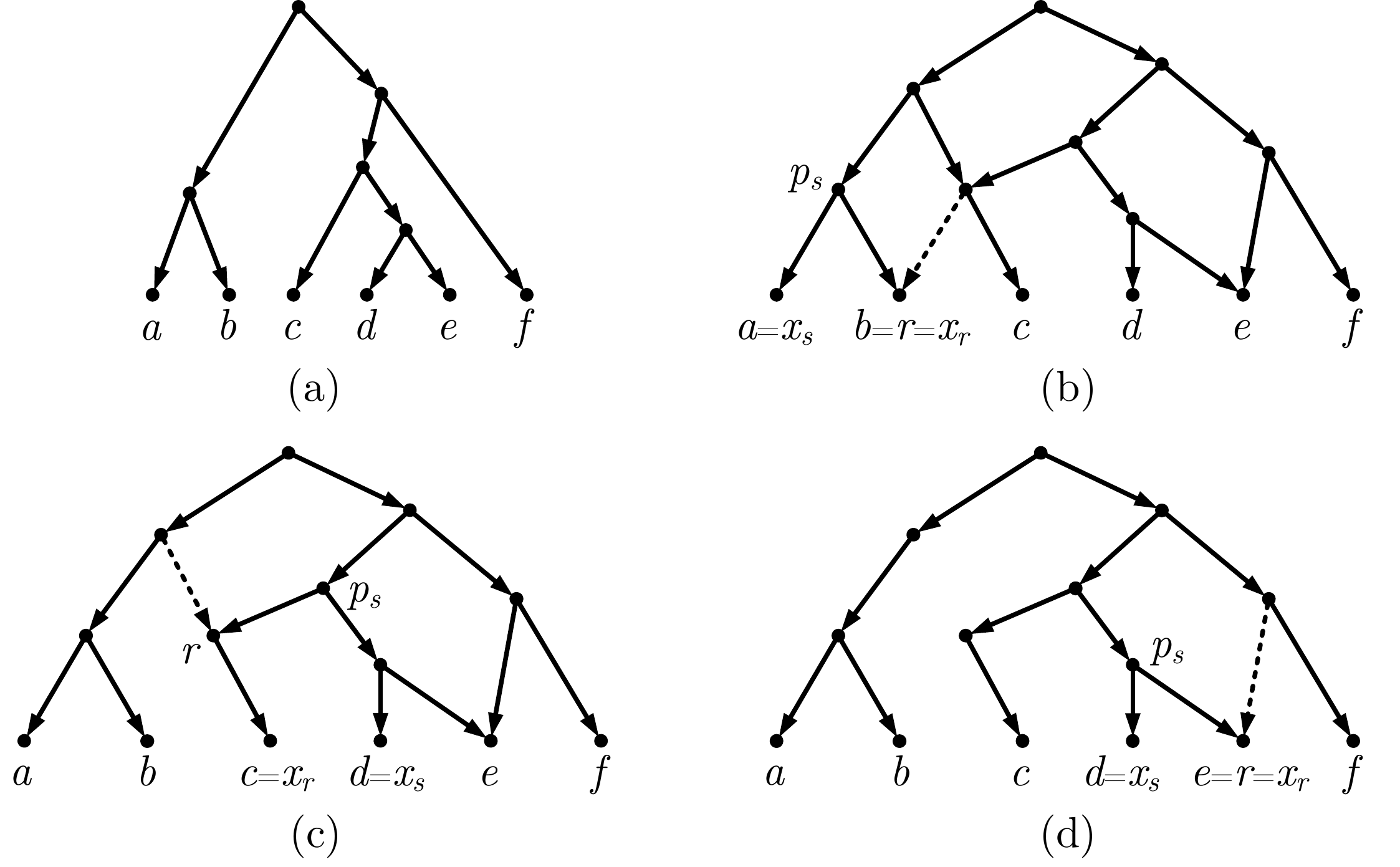}
  \caption{Illustration of algorithm \textsc{LocateNormal} for tree~$T$ in (a) and normal network~$N$ in (b). Dotted edges are removed from~$N$; in~(b) because~$a$ and $b$ are siblings in $T|\{a,b,c\}$, in~(c) because~$c$ and~$d$ are siblings in $T|\{a,c,d\}$, and in~(d) because~$d$ and~$e$ are siblings in $T|\{d,e,f\}$.}
  \label{fig:normalalg}
\end{figure}

\begin{theorem}\label{thm:normal} Given a normal phylogenetic network~$N$ on~$\mathcal{X}$ and a phylogenetic tree~$T$
on~$\mathcal{X}$, algorithm \textsc{LocateNormal} decides in polynomial time whether~$N$ displays~$T$. Moreover, if~$N$ displays~$T$, \textsc{LocateNormal} finds the unique subtree of~$N$ that is a subdivision of~$T$.
\end{theorem}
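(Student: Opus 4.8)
The plan is to establish three things — a polynomial running time, correctness, and the uniqueness of the returned subtree — with almost all of the work concentrated in a single \emph{forcing lemma}. The time bound is routine: $N$ has at most $|V(N)|$ reticulations, and for each the algorithm computes tree-paths, forms $X'$, builds $T|X'$ and tests for a sibling, all in polynomial time, after which a single subdivision test is performed. So I would spend the effort on showing that each deletion in Step~4 is forced, and that the final subdivision test correctly decides the problem.

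The key preliminary observation is that a tree-path rigidly controls how its endpoint is reached. If $r=v_0\to v_1\to\cdots\to v_k=x_r$ is a tree-path, then each $v_i$ with $i\ge 1$ is a tree-vertex and so has indegree one with $v_{i-1}$ as its only parent; hence \emph{every} directed path reaching $x_r$ must pass through $r$, and its final segment is forced to be exactly this tree-path. Applying this to each parent $p$ of $r$ shows the associated leaf $x_p$ is reachable only through $p$. I would then use (R3) to show the parents of $r$ are pairwise incomparable — were $q<p$ both parents of $r$, the edge $(q,r)$ together with the path $q\to p\to r$ would violate (R3) — and deduce that the leaves of $X'$ are pairwise distinct.

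Now suppose $N$ displays $T$ via a subdivision $T'$, and let $(p,r)$ be the edge entering $r$ that $T'$ uses. Since $x_r$ and each $x_q$ are reachable only through $r$ and $q$, and since $r$ has the single parent $p$ in the tree $T'$, I can compute lowest common ancestors inside $T'$: the ancestors of $x_r$ are the forced path below $r$ together with $p$ and the ancestors of $p$, giving $\mathrm{lca}_{T'}(x_r,x_p)=p$, while for any other parent $q$ the incomparability of $p$ and $q$ forces $\mathrm{lca}_{T'}(x_r,x_q)=\mathrm{lca}_{T'}(p,q)$, a \emph{strict} ancestor of $p$. Thus $x_p$ is the unique leaf of $X'\setminus\{x_r\}$ whose lowest common ancestor with $x_r$ is deepest, which says exactly that $x_p$ is the unique sibling of $x_r$ in $T|X'$. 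This is the forcing lemma: in \emph{every} subdivision the edge entering $r$ is the one the algorithm retains, and in particular a sibling always exists, so Step~3 never rejects erroneously.

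Correctness and uniqueness then follow by an invariant. I would check that deleting the non-retained edges entering $r$ preserves the tree-child property (each affected parent keeps the tree-vertex child witnessing its tree-path) and (R3) (deleting edges cannot create a violation), so the forcing lemma remains applicable as the network is modified. Because the lemma shows every subdivision of $T$ uses the retained edge, each iteration leaves the set of subdivisions of $T$ unchanged, so by induction it equals that of the original $N$; when no reticulation remains this set is $\{T'\}$ if the final tree $T'$ is a subdivision of $T$ and is empty otherwise. Hence $N$ displays $T$ exactly when $T'$ is a subdivision, and then $T'$ is the unique one. The main obstacle is the forcing lemma itself — specifically the lowest-common-ancestor comparison establishing \emph{uniqueness} of the sibling and the correct naming of the parent — together with the bookkeeping that the tree-child and (R3) properties survive edge deletion so the lemma can be reapplied at each step.
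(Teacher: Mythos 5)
Your proposal is correct and follows essentially the same route as the paper's proof: distinctness of the leaves in $X'$ via (R3), the observation that any subdivision $T'$ must contain $r$ and exactly one edge $(p,r)$ entering it, and the identification of that edge through the sibling of $x_r$ in $T|X'$, with uniqueness following because no choice is ever available. Your lowest-common-ancestor computation and the check that the tree-child and (R3) properties survive edge deletion simply make explicit details that the paper leaves implicit.
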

\begin{proof}
We first show that, for any reticulation~$r$, the leaves in the set~$X'$ described by the algorithm (which exist by the tree-child property) are always distinct. To show this, assume that two such tree-paths end in the same leaf. Since tree-paths cannot combine in a reticulation, the only remaining possibility is that, for two parents~$p_a,p_b$ of~$r$, the tree-path from~$p_a$ to a leaf passes through~$p_b$. However, this implies the existence of a directed path $p_a\rightarrow r$ of length greater than one, contradicting (R3), since there is also an edge~$(p_a,r)$.

We now prove that \textsc{LocateNormal} reports that~$N$ displays~$T$ if and only if this is the case. If the algorithm reports that~$N$ displays~$T$, then it is clear that~$N$ indeed displays~$T$, since the algorithm checks if the constructed subtree~$T'$ of~$N$ is a subdivision of~$T$. Now suppose that~$N$ displays~$T$. Then there exists a subtree~$T'$ of~$N$ that is a subdivision of~$T$. Take any reticulation~$r$ and~$X'$ as in the algorithm. Observe that the root of~$T'$ is the root of~$N$ by property~(R1). Thus, by this observation and the fact that there is a tree-path from~$r$ to a leaf,~$r$ and exactly one of the edges entering~$r$ are included in~$T'$. Let~$(p',r)$ be the edge entering~$r$ that is included in~$T'$ and~$x'$ the leaf in~$X'$ that can be reached from~$p'$ by a tree-path. Then~$x'$ is a sibling of~$x_r$ in~$T'|X'$ (which is isomorphic to $T|X'$) and hence $p'=p_s$. It follows that \textsc{LocateNormal} deletes all reticulation-edges not included in~$T'$, thus reconstructing~$T'$. Moreover, since there is no choice which reticulation-edges to delete, the subtree~$T'$ is unique.

The running-time is clearly polynomial in the size of the input. Moreover, the number of vertices of a normal network grows at most quadratically with the number of leaves~\cite{Willson2009b}. Hence, the running-time is also polynomial in the number of leaves. \qed
\end{proof}

\subsection{Tree-Child Networks}\label{sec:treechild}

This subsection shows that, given a binary tree-child phylogenetic network~$N$ and a phylogenetic tree~$T$, one can decide in polynomial time whether~$N$ displays~$T$. We partition the reticulations of~$N$ into four types.\\
\textbf{Type I.} There is no tree-path between the parents of the reticulation.\\
\textbf{Type II.} There is a singe edge connecting the parents of the reticulation.\\
\textbf{Type III.} There is a tree-path between the parents of the reticulation that contains a vertex that is the parent of two tree-vertices.\\
\textbf{Type IV.} There is a tree-path between the parents of the reticulation that contains at least one internal vertex but no vertex that is the parent of two tree-vertices.

Note that in a normal network all reticulations are of Type~I, because of restriction (R3). Thus, the difficulty in generalizing the algorithm for normal networks in Section~\ref{sec:normal} is that the algorithm for tree-child networks below also has to take reticulations of Types~II, III and (especially) IV into account.

\medskip
\noindent\textbf{Algorithm} \textsc{LocateTreeChild}\\
Repeat the following four steps until none is applicable.\\
\noindent\textbf{Step 1.} If there is a reticulation of Type I, proceed as for normal networks.\\
\noindent\textbf{Step 2.} If there is a reticulation of Type II, choose one of its two incoming edges arbitrarily and remove it.\\
\noindent\textbf{Step 3.} If there is a reticulation~$r$ of Type III, do the following.
\begin{enumerate}
\item[3.1] Let~$p_a$ and~$p_b$ be the parents of~$r$ such that~$p_a<p_b$. Pick a vertex~$v$ that lies on the path from~$p_a$ to~$p_b$ and is the parent of two tree-vertices. Let~$c$ be the child of~$v$ that does not lie on the path between the parents of~$r$. Let~$x_r,x_b$ and~$x_c$ be leaves that can be reached from~$r,p_b$ and~$c$ respectively by tree-paths.
\item[3.2] If~$x_b$ and~$x_r$ are siblings in~$T|\{x_b,x_r,x_c\}$, remove the edge $(p_a,r)$ from~$N$.
\item[3.3] If~$x_b$ and~$x_c$ are siblings in~$T|\{x_b,x_r,x_c\}$, remove the edge $(p_b,r)$ from~$N$.
\item[3.4] Otherwise, report that~$N$ does not display~$T$.
\end{enumerate}
\noindent\textbf{Step 4.} If there is a reticulation of Type IV, do the following.
\begin{enumerate}
\item[4.1] If~$N$ contains a nontrivial cut-edge~$(u,v)$, let~$C$ be the set of taxa reachable from~$v$. Find a vertex~$v'$ of~$T$ with $C=cl(v')$ (if there is no such vertex, report that~$N$ does not display~$T$). Construct~$T_c$ from~$T$ by deleting all descendants of~$v'$ and labeling~$v'$ by a new taxon~${x_c\notin\mathcal{X}}$. Construct~$N_c$ from~$N$ by deleting all descendants of~$v$ and labeling~$v$ by~$x_c$. Construct $N|C$ by restricting~$N$ to the vertices and edges reachable from~$v$. Run the algorithm recursively for $(N_c,T_c)$ and $(N|C,T|C)$. If~$N_c$ displays~$T_c$ and~$N|C$ displays~$T|C$, report that~$N$ displays~$T$. Otherwise, report that~$N$ does not display~$T$.
\item[4.2] Contract all vertices with indegree and outdegree 1. Find a leaf~$x$ that has a reticulation as sibling. Construct a tree-path~$P$ as follows. Initialise~$P$ as~$x$ and, as long as the first vertex~$v$ of~$P$ has a reticulation as sibling, add the parent of~$v$ to~$P$. Construct a set~$X'$ by including~$x$ and, for each reticulation that has a parent on~$P$, a leaf that can be reached from this reticulation by a tree-path.
\item[4.3] If~$x$ does not have a sibling in~$T|X'$, report that~$N$ does not display~$T$.
\item[4.4] If~$x$ does have a sibling~$x_s$ in~$T|X'$, let~$r'$ be the reticulation from which~$x_s$ can be reached by a tree-path. Let~$p'_a$ and~$p'_b$ be the parents of~$r'$ such that $p'_a<p'_b$. Remove the edge $(p'_a,r')$ from~$N$.
\end{enumerate}

When there are no more reticulations left, check if the resulting phylogenetic tree~$T'$ is a subdivision of~$T$. If it is, report that~$N$ displays~$T$. Otherwise, report that~$N$ does not display~$T$.

\begin{theorem}\label{thm:treechild} Given a binary tree-child phylogenetic network~$N$ on~$\mathcal{X}$ and a phylogenetic tree~$T$ on~$\mathcal{X}$, algorithm \textsc{LocateTreeChild} decides in polynomial time whether~$N$ displays~$T$.
\end{theorem}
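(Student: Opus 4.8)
The plan is to prove two things: that \textsc{LocateTreeChild} terminates in polynomial time, and that it reports that $N$ displays $T$ precisely when this is true. For the running time, I would first invoke the known fact that a binary tree-child network on $\mathcal{X}$ has $O(|\mathcal{X}|)$ vertices and hence $O(|\mathcal{X}|)$ reticulations. I would then observe that each of Steps~1--3 and Step~4.2--4.4 deletes a reticulation-edge and so strictly decreases the number of reticulations, while Step~4.1 replaces the instance by two instances whose combined size is smaller; thus a straightforward well-founded-recursion argument bounds the total number of step-applications, and each individual step is clearly executable in polynomial time (the tree-paths, the restrictions $T|X'$, and the cut-edges are all computable in polynomial time). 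The only point needing care here is that the four types genuinely partition the reticulations and that after each deletion the network remains tree-child, so that the next iteration is again of one of the four types.

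For correctness I would isolate a single \emph{safety invariant}: each step preserves the answer to ``does the current network display $T$?''. Because every operation other than the recursion only removes edges, the implication ``the modified network displays $T\Rightarrow$ the original displays $T$'' is immediate, since a subdivision of $T$ inside the smaller network is also one inside $N$. Hence the whole content is the forward direction: if $N$ displays $T$ before a step, then the network (or both recursive subinstances) still displays $T$ after it. Granting the invariant for every step, a routine induction shows that the final tree $T'$ exists and is a subdivision of $T$ exactly when $N$ displays $T$, which is what the algorithm checks; the ``if'' half of the theorem is then done. It therefore remains only to verify the forward safety for each step type.

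Steps~1--3 I would handle by small local exchange arguments. For Type~I the reasoning of Theorem~\ref{thm:normal} applies verbatim: the leaves collected in $X'$ are distinct, and in any embedding the unique used edge into $r$ is the one whose tree-path reaches the sibling of $x_r$ in $T|X'$, so the deletion is forced and hence safe. For Type~II, with parents $p_a,p_b$ joined by the edge $(p_a,p_b)$, I would prove a swap lemma: any embedding of $T$ that uses one incoming edge of $r$ can be rerouted to use the other, using the direct edge $(p_a,p_b)$ and the binary tree-child structure to redirect the path feeding $r$; consequently deleting either incoming edge is safe, justifying the arbitrary choice. For Type~III, the three tree-path leaves $x_r,x_b,x_c$ give a rooted triple in $T$; I would show that in any embedding the used edge into $r$ is determined by which two of $x_r,x_b,x_c$ form a cherry in $T|\{x_b,x_r,x_c\}$, that the pairing $\{x_r,x_c\}$ cannot arise, and that the two admissible pairings correspond exactly to deleting $(p_a,r)$ and $(p_b,r)$ respectively, so Step~3 removes the unused edge and is safe while Step~3.4 correctly reports failure.

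The main obstacle is Step~4 (Type~IV). Subcase~4.1 is a cut-edge decomposition, for which I would prove that $N$ displays $T$ iff $N_c$ displays $T_c$ and $N|C$ displays $T|C$. The key observation is that, because $(u,v)$ is a nontrivial cut-edge, every vertex of $N$ below $v$ lies on the $C$-side and every embedding must route the subtree of $T$ below $v'$ entirely through $N|C$; the new taxon $x_c$ records the attachment point, and gluing the two embeddings along $x_c$ recovers an embedding in $N$. Subcase~4.2--4.4, where no nontrivial cut-edge remains, is the genuinely delicate part and the place I expect to spend most of the effort. Here a reticulation's parents are joined by a bare tree-path, so reticulations stack along a common tree-path $P$ and interact. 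I would need to show that the chosen leaf $x$ and path $P$ are well-defined, that the leaves collected in $X'$ are pairwise distinct (again via an (R3)-style argument, but now using the absence of the relevant cut-edges together with tree-child-ness rather than normality), and --- the crux --- that deleting the ``upper'' edge $(p'_a,r')$ at the reticulation $r'$ identified by the sibling test is safe. Proving this requires an exchange argument along the chain $P$: starting from an arbitrary embedding of $T$, I would transform it, reticulation by reticulation along $P$, into one that does not use $(p'_a,r')$ while preserving the displayed tree. Managing the interaction between the stacked reticulations --- ensuring that rerouting one does not invalidate the routing of another --- is the hard part, and I expect the sibling test of Step~4.3--4.4 to be exactly what guarantees that such a consistent rerouting exists, and that Step~4.3 correctly detects the non-displaying case when it does not.
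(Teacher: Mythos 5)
Your overall architecture matches the paper's: reduce by Steps~1--3 and the cut-edge decomposition~4.1 (all of which the paper also treats briefly, essentially as you do), and identify Steps~4.2--4.4 as the crux. But for that crux your proposal stops at a statement of intent (``I would need to show\dots'', ``I expect the sibling test\dots to be exactly what guarantees\dots'') rather than an argument, and the specific ideas that make it work are missing. Concretely: (a)~the existence of a leaf~$x$ with a reticulation sibling is proved by taking a tree-path of maximum length and arguing its endpoint is such a leaf; (b)~once all Type~III reticulations are resolved, \emph{both} parents of every reticulation with a parent on~$P$ must themselves lie on~$P$ (otherwise the tree-path between the two parents would contain a vertex with two tree-children, i.e.\ the reticulation would be of Type~III) --- this is the structural fact that turns~$P$ into a spine of stacked reticulations and makes the sibling test in~$T|X'$ meaningful; and (c)~the safety of deleting $(p'_a,r')$ is \emph{not} established by an iterative, reticulation-by-reticulation rerouting along~$P$ as you propose, but by a single exchange: let~$v_s$ be the \emph{last} vertex on~$P$ whose outgoing reticulation-edge is used by the given embedding~$T'$. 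The leaf reached from the reticulation-child of~$v_s$ is then a sibling of~$x$ in $T'|X'\cong T|X'$, so the algorithm targets exactly this reticulation~$r'$. If the deleted edge $(p'_a,r')$ is not the one in~$T'$ there is nothing to do; if it is (i.e.\ $p'_a=v_s$), replacing it by $(p'_b,r')$ yields another subdivision of~$T$ precisely because, by the maximality of~$v_s$, no reticulation-edge leaving~$P$ strictly below~$v_s$ is used, so the segment of~$P$ from~$v_s$ down to~$p'_b$ carries no other branching of the embedding.

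The ``last used edge on~$P$'' choice is the missing key: it collapses your worried-about ``interaction between stacked reticulations'' into a single local swap whose validity is immediate. Without it, your plan for an incremental rerouting would have to confront exactly the consistency problem you flag and never resolve, so as written the proposal has a genuine gap at the theorem's hardest step. (Minor additional points: the distinctness of the leaves in~$X'$ in Step~4.2 follows simply because tree-paths from distinct reticulations cannot merge --- no (R3)-style argument is needed there --- and your running-time sketch is consistent with the paper's, which bounds the number of edges of a binary tree-child network by $5n-2$.)
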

\begin{proof}
If there exist reticulations of Type~I or~III, it can be shown analogously to the proof of Theorem~\ref{thm:normal} that Steps~1 and~3 of \textsc{LocateTreeChild} either resolve these reticulations in the only possible way, or conclude that~$N$ does not display~$T$. It is furthermore clear that Step~2 correctly deals with reticulations of Type~II (because~$N$ is binary) and Step~4.1 with nontrivial cut-edges. Thus, from now on, we assume that all reticulations are of Type~IV and that there are no nontrivial cut-edges.

We first show that a leaf~$x$ as described by Step~$4.2$ indeed exists, i.e. we show that there is a leaf that has a sibling which is a reticulation. Consider a tree-path~$Q$ of maximum length. Because~$Q$ is a tree-path, it ends in a tree-vertex~$x$, which must be a leaf because~$N$ is tree-child and~$Q$ is of maximum length. The single parent of~$x$ is a tree-vertex because, if it were a reticulation,~$Q$ would consist of just~$x$ but, by the existence of a Type~IV reticulation, there exists a tree-path of length at least two. Thus, leaf~$x$ has a sibling, which has to be a reticulation because otherwise it would be possible to extend~$Q$ to a longer tree-path. This shows the existence of a leaf~$x$ as described in Step~$4.2$.

We now show that the algorithm reports that~$N$ displays~$T$ precisely when this is the case. The algorithm checks if the constructed subgraph~$T'$ is a subdivision of~$T$, so it will clearly never report that~$N$ displays~$T$ if this is not the case. So assume that~$N$ does display~$T$. It remains to show that the algorithm will construct some subdivision of~$T$ in~$N$. Observe that, apart from~$x$, all vertices on the path~$P$ have two children: one reticulation and one tree-vertex (which lies on~$P$). Let~$R$ be the set of all reticulations with a parent on~$P$. Thus,~$X'$ consists of~$x$ and, for each reticulation in~$R$, a leaf that can be reached from this reticulation by a tree-path. By the choice of~$x$, the path~$P$ contains at least one other vertex, and so the set~$X'$ contains at least two leaves.

We claim that all parents of reticulations in~$R$ lie on~$P$. Assume that this is not the case. Then there is a reticulation~$\hat{r}$ with two parents~$\hat{p}_a,\hat{p}_b$ such that $\hat{p}_a<\hat{p}_b$ and one of~$\hat{p}_a,\hat{p}_b$ is not on~$P$ while the other one is. Since~$P$ and the path $\hat{p}_a<\hat{p}_b$ both contain only tree-edges, we must have that~$\hat{p}_a$ is on~$P$ while~$\hat{p}_b$ is not. However, then the path~$\hat{p}_a\rightarrow\hat{p}_b$ contains a vertex that is the parent of two tree-vertices, which means that~$\hat{r}$ is of Type~III. This is a contradiction because the algorithm has already resolved all reticulations of Type~III in Step~3.

Thus, apart from~$x$, $P$ consists precisely of the parents of reticulations in~$R$. Since~$N$ displays~$T$, there exists a subtree~$T'$ of~$N$ that is a subdivision of~$T$. We next claim that~$x$ has a sibling in~$T'|X'$. Let~$v_s$ be the last vertex on the path~$P$ for which the reticulation-edge leaving~$v_s$ is included in~$T'$ (such a vertex exists because one of the edges entering the sibling of~$x$ is included in~$T'$). Let~$x_s$ be the leaf in~$X'$ that can be reached from the reticulation-child of~$v_s$ by a tree-path. Clearly,~$x_s$ and~$x$ are siblings in~$T'|X'$ (which is isomorphic to~$T|X'$). Thus, the algorithm removes the edge $(p'_a,r')$ from~$N$, with~$r'$ the reticulation-child of~$v_s$ and~$p'_a$ equal to either~$v_s$ or the other parent of~$r'$.

First suppose that~$p'_a$ is not equal to~$v_s$. Thus, $v_s=p'_b$ and~$p'_a$ is the other parent of~$r'$. In this case, the algorithm correctly removes the edge $(p'_a,r')$ that is not in~$T'$ and we are done.

Now assume that~$p'_a=v_s$ and thus that the algorithm removes the edge $(p'_a,r')$ that is in~$T'$. We replace in~$T'$ the edge $(p'_a,r')$ by the edge $(p'_b,r')$. The resulting subgraph of~$N$ is again a subdivision of~$T$, because none of the reticulation-edges leaving vertices below~$p'_a=v_s$ were included in~$T'$ (by the choice of~$v_s$). See Figure~\ref{fig:treechildproof} for an illustration. We conclude that the algorithm correctly constructs a subdivision of~$T$ in~$N$.

The running-time is clearly polynomial in the size of the input. Moreover, the number of reticulations in a tree-child network is at most~$n=|\mathcal{X}|$ and it can be shown by induction on the number~$n_r$ of reticulations that the number of edges of a binary tree-child network is at most~$2n+3n_r-2\leq 5n-2$. Hence, the running-time is also polynomial in the number of leaves. \qed
\end{proof}

\begin{figure}[ht]
  \centering
  \includegraphics[scale=.5]{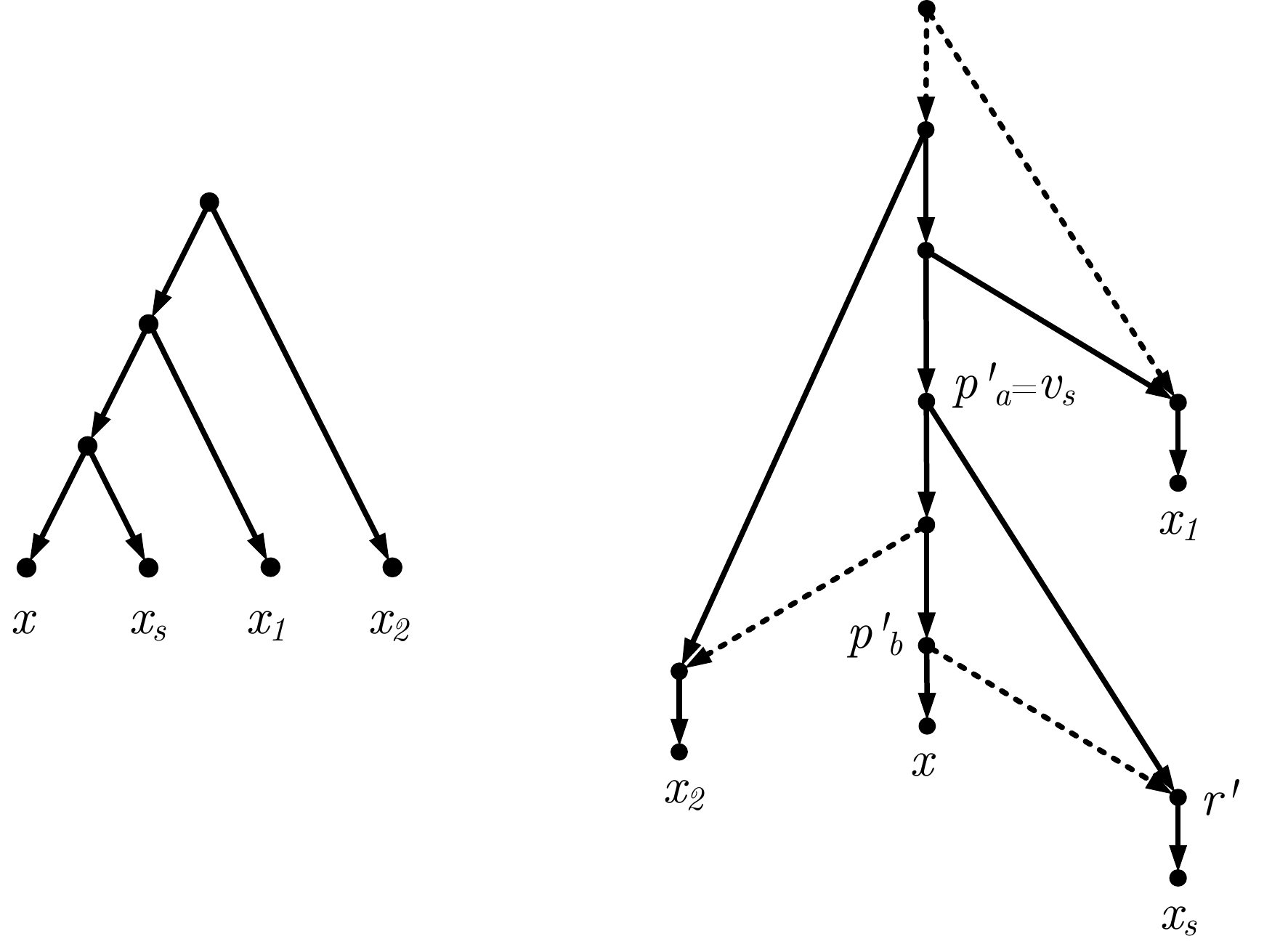}
  \caption{Illustration of the last case in the proof of Theorem~\ref{thm:treechild}. A subdivision of the tree~$T$ in the network~$N$ is indicated by solid lines. Replacing edge $(p'_a,r')$ by $(p'_b,r')$ gives another subdivision of~$T$ in~$N$, because $(p'_a,r')$ is the last edge leaving the path~$P$ (the vertical path) that is used by the subdivision.}
  \label{fig:treechildproof}
\end{figure}

\section{Cluster Containment}\label{sec:cluster}

A \emph{cluster} is a strict subset of~$\mathcal{X}$. There are two different ways of seeing clusters in networks. Let~$v$ be a vertex of a network~$N$ on~$X$. The \emph{hardwired cluster} of~$v$, denoted~$cl(v)$, is the set containing all taxa that can be reached from~$v$. A cluster is said to be a hardwired cluster of~$N$ if it is the hardwired cluster of some vertex of~$N$. A cluster~$C\subset\mathcal{X}$ is said to be a \emph{softwired cluster} of~$v$ if~$C$ equals the set of all taxa that can be reached from~$v$ when, for each reticulation~$r$, exactly one incoming edge of~$r$ is ``switched on'' and the other incoming edges of~$r$ are ``switched off''. Notice that each vertex has exactly one hardwired cluster and one or more softwired clusters. We say that~$C$ is a \emph{softwired cluster} of a network~$N$ if it is a softwired cluster of some vertex of~$N$. The softwired clusters of~$N$ can elegantly be characterized as follows. A cluster~$C\subset\mathcal{X}$ is a softwired cluster of~$N$ if and only if there exists a tree~$T$ such that~$C$ is a hardwired cluster of~$T$ and~$T$ is displayed by~$N$. We are interested in the following decision problem.

\medskip
\noindent\begin{tabular}{lp{.85\textwidth}}
\multicolumn{2}{l}{\textsc{Cluster Containment}} \\
\textit{Instance:} & A set~$\mathcal{X}$ of taxa, a rooted phylogenetic network~$N$ on~$X$ and a cluster~${C\subset\mathcal{X}}$.\\
\textit{Question:} & Is~$C$ a softwired cluster of~$N$? \\
\end{tabular}
\medskip

This problem is known to be NP-complete~\cite{KanjEtAl2008} for general binary phylogenetic networks. However, it has a polynomial-time algorithmic solution if we restrict~$N$ to tree-child networks~\cite{HusonEtAl2010}. We will show below (Theorem \ref{thm:reghard}) that, if we extend the class of normal networks to regular networks, the problem \textsc{Cluster Containment} becomes NP-hard, even if we add further structural restrictions to this class of networks. But first we describe yet another class of networks for which both problems --\textsc{Tree Containment} and \textsc{Cluster Containment} -- have polynomial-time algorithms.

\subsection{Level-$k$ Networks}

We finish this section by observing that both \textsc{Tree Containment} and \textsc{Cluster Containment} are polynomial-time solvable for the class of binary level-$k$ networks. A binary network is \emph{biconnected} if it contains no cut-edges. A biconnected subgraph~$B$ of a binary network~$N$ is said to be a \emph{biconnected component} if there is no biconnected subgraph~$B' \neq B$ of $N$ that contains~$B$. A binary phylogenetic network is a \emph{${\mbox{level-}k}$ network} if each biconnected component has at most~$k$ reticulations.

\begin{observation}\label{obs:levelk}
\textsc{Tree Containment} and \textsc{Cluster Containment} are polynomial-time solvable when restricted to binary level-$k$ networks, for any fixed~$k$.
\end{observation}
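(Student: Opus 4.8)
The plan is to exploit the decomposition of a binary network into its biconnected components, which are joined in a tree-like fashion by the cut-edges. The crucial structural fact is that, for any reticulation~$r$ with parents~$p_a,p_b$, the two edges $(p_a,r)$ and $(p_b,r)$ lie in a common biconnected component: following directed paths from the root down to~$p_a$ and to~$p_b$ and then into~$r$ yields an undirected cycle through both edges, so neither edge is a cut-edge and both belong to the same block. Consequently every reticulation is confined to a single biconnected component, and since each biconnected component of a level-$k$ network has at most~$k$ reticulations and~$N$ is binary (each reticulation having indegree two), there are at most~$2^k$ ways to ``switch on'' one incoming edge per reticulation inside a given component. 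As~$k$ is fixed, this is a constant number of local switchings per component.

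For \textsc{Tree Containment}, I would first decompose~$N$ recursively at its nontrivial cut-edges exactly as in Step~4.1 of \textsc{LocateTreeChild}: given a nontrivial cut-edge~$(u,v)$ with $C=cl(v)$, the set~$C$ must be a cluster of~$T$ (otherwise report failure), and one splits into the pendant problem $(N|C,\,T|C)$ and the contracted problem $(N_c,\,T_c)$ in which the pendant is replaced by a single new leaf. Because~$(u,v)$ is the only connection between the pendant and the rest of~$N$, these subproblems are independent, so~$N$ displays~$T$ if and only if both subproblems are positive. Iterating this reduces to base cases in which~$N$ has no nontrivial cut-edge, i.e. consists of a single biconnected component together with trivial (leaf) cut-edges. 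For such a base case I would enumerate all at most~$2^k$ switchings; each switching turns the component into a tree~$T'$, and one checks whether~$T'$ is a subdivision of the associated tree. The number of decomposition steps and base cases is polynomial in the size of~$N$, and each base case costs only a constant number of subdivision checks, so the whole procedure runs in polynomial time.

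For \textsc{Cluster Containment}, I would use the stated characterisation: $C$ is a softwired cluster of~$N$ precisely when there is a vertex~$v$ and a global switching for which the set of leaves reachable from~$v$ equals~$C$. Since a switching can only shrink hardwired reachability, $C\subseteq cl(v)$ is necessary, restricting attention to the polynomially many candidate vertices~$v$. For a fixed~$v$, observe that each biconnected component other than the one containing~$v$ is attached to its parent component by a single cut-edge and hence has a unique entry vertex; reachability therefore propagates from~$v$ downward through the tree of biconnected components, with each component either entirely ``activated'' (its entry reachable) or not. This lets one decide feasibility by a top-down, memoised search that, at each activated component, tries all at most~$2^k$ switchings: a switching fixes which local leaves are reached (these must agree with~$C$) and which child components become activated, and one recurses, demanding in addition that no leaf of~$C$ lies below a non-activated component. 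Fixing~$k$ again makes~$2^k$ a constant and the search polynomial.

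The main obstacle, and the part requiring the most care, is justifying that both problems genuinely decouple across biconnected components: for \textsc{Tree Containment} one must verify that the cut-edge decomposition yields truly independent subproblems (which follows from the uniqueness of the pendant cluster and the single point of attachment), and for \textsc{Cluster Containment} one must set up the dynamic program so that the exact-reachability condition on~$C$ is maintained correctly as components are combined. Once the localisation of each reticulation to a single biconnected component is established, these arguments are routine, and the bound of~$2^k$ switchings per component is precisely what converts the otherwise exponential search over switchings into a polynomial-time algorithm for every fixed~$k$.
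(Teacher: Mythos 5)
Your proposal is correct and follows essentially the same route as the paper: reduce along nontrivial cut-edges (the paper invokes Step~4.1 of \textsc{LocateTreeChild} for exactly this), observe that what remains has at most~$k$ reticulations, and brute-force the~$2^k$ switchings. The only difference is that you spell out details the paper leaves implicit --- the localisation of each reticulation to a single biconnected component and a more explicit block-by-block dynamic programme for \textsc{Cluster Containment} --- but the underlying idea is identical.
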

\begin{proof} We may assume that the network contains no nontrivial cut-edges, because Step~4.1 of algorithm \textsc{LocateTreeChild} can be applied until there are no nontrivial cut-edges left. A level-$k$ network with no nontrivial cut-edges contains at most~$k$ reticulations. Thus, we can loop through all~$2^k$ ways of selecting one incoming edge for each of the~$k$ reticulations. For each of the resulting trees, we check whether it is a subdivision of the input tree (in case of the \textsc{Tree Containment} problem) or check whether the input cluster is a cluster of the obtained tree (in case of the \textsc{Cluster Containment} problem).
\qed
\end{proof}

\section{NP-Completeness for Tree-Sibling, Time-Consistent, Regular Networks}\label{sec:hard}

A phylogenetic network is said to be \emph{tree-sibling} (see e.g.~\cite{CardonaEtAl2008}) if each reticulation has a sibling that is a tree-vertex. A phylogenetic network~$N$ is said to be \emph{time-consistent} (see e.g.~\cite{BaroniEtAl2006,LinzEtAl}) if it is possible to assign each vertex~$v$ of~$N$ a ``time stamp'' $t(v)\in\mathbb{R}^{\geq 0}$ such that for each edge~$(u,v)$ of~$N$:
\begin{enumerate}
\item[(TC1)] $t(u) < t(v)$ if $(u,v)$ is a tree-edge and
\item[(TC2)] $t(u) = t(v)$ if $(u,v)$ is a reticulation-edge.
\end{enumerate}

\begin{theorem}\label{thm:reghard} \textsc{Tree Containment} and \textsc{Cluster Containment} are both NP-complete when restricted to tree-sibling, time-consistent, regular phylogenetic networks.
\end{theorem}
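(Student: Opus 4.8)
The plan is to show that both problems lie in NP and then to prove NP-hardness by a single reduction from a convenient NP-complete problem, for which I would take \textsc{3-Sat}. Membership is the easy direction: a certificate is a choice, for each reticulation $r$, of exactly one incoming edge to ``switch on''. Such a choice determines a tree $T'$ displayed by $N$, and one then checks in polynomial time whether $T'$ is a subdivision of $T$ (for \textsc{Tree Containment}) or whether $C$ is a cluster of $T'$ (for \textsc{Cluster Containment}, using the characterisation of softwired clusters given above). The certificate has size polynomial in $N$ and both checks are polynomial, so both problems are in NP.

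For hardness I would engineer a reduction that outputs a single network $N$ together with a tree $T$ and a cluster $C$ such that the following are equivalent: the input formula $\phi$ is satisfiable; $N$ displays $T$; and $C$ is a softwired cluster of $N$. The central device is to let each variable $x_i$ of $\phi$ be controlled by a reticulation $r_i$ whose two incoming edges encode the assignments $x_i=\mathrm{true}$ and $x_i=\mathrm{false}$, so that ``switching on'' one incoming edge selects a truth value. Clause gadgets are then attached so that a designated collection of clause-taxa becomes reachable from a fixed vertex under a switching precisely when every clause is satisfied. To keep regularity under control, I would specify $N$ by its family of clusters rather than by drawing it directly: regular networks are precisely the cover digraphs (Hasse diagrams) of their sets of clusters, so prescribing the clusters, and attaching a private leaf below each internal vertex to force distinctness, yields (R1)--(R3) essentially by construction, with reticulations arising exactly at those clusters that admit two incoming cover relations.

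The step I expect to be the main obstacle is arranging the gadgets so that all three structural restrictions hold simultaneously while the encoding stays faithful. Time-consistency is the most delicate, because a reticulation-edge $(u,v)$ forces $t(u)=t(v)$, so a reticulation and all of its parents must receive a single common time stamp; in particular no directed path may join two parents of a reticulation. I would therefore lay out the variable and clause gadgets in disjoint ``time layers'' and route the two incoming edges of each $r_i$ from parents in the same layer, which guarantees a global time function satisfying (TC1) and (TC2). Tree-sibling is then enforced locally, by giving every reticulation a sibling that is a tree-vertex (for example the parent of a private leaf), and one checks that these additions create no shortcut edge of the kind forbidden by (R3).

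Finally I would verify correctness in both directions. For the forward direction, a satisfying assignment of $\phi$ dictates which incoming edge of each $r_i$ to switch on, and one confirms that the induced tree $T'$ is a subdivision of $T$ and that the set reachable from the designated vertex equals $C$. For the converse, any switching that displays $T$, respectively that realises $C$ as a softwired cluster, must make a consistent choice at each $r_i$ and must render every clause-taxon reachable, from which one reads off a satisfying assignment; here the two-sided nature of the equality $cl(v)=C$ is exactly what forces both a well-defined assignment and the satisfaction of all clauses. Since the construction is polynomial and produces a tree-sibling, time-consistent, regular network, this yields NP-hardness, and together with membership in NP it gives NP-completeness of both problems.
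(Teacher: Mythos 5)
Your NP-membership argument is fine, but the hardness half of your proposal has a genuine gap: the entire content of the reduction is deferred. You propose a direct reduction from \textsc{3-Sat} with variable reticulations and clause gadgets, but you never specify the gadgets, and you yourself identify the crux --- making the gadgets simultaneously regular, time-consistent, and tree-sibling while keeping the encoding faithful --- as ``the main obstacle'' without resolving it. That obstacle is exactly where the difficulty of the theorem lives. For instance, your layering idea handles (TC1)--(TC2) only if every reticulation's two parents can be placed in a common layer with no tree-path between them, yet your clause gadgets need reachability to propagate through those same reticulations; your ``private leaf below each internal vertex'' trick for (R1) must not itself create shortcut edges violating (R3) or destroy the time function; and specifying $N$ as the Hasse diagram of a cluster system gives you regularity but surrenders direct control over which vertices become reticulations and over time-consistency. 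None of these interactions is checked, and the two correctness directions (``one confirms'', ``one reads off'') are asserted rather than proved. As written, this is a research plan, not a proof.

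The paper sidesteps all of this by reducing not from \textsc{3-Sat} but from the \emph{unrestricted} \textsc{Tree Containment} and \textsc{Cluster Containment} problems, already known to be NP-complete (Kanj et al.). Given an arbitrary instance $(N,T)$, it applies a single local operation, \textsc{HangLeaves}$(v)$, which attaches a new cherry $\{x,x'\}$ above the root and an extra edge $(v,x)$, making $x$ a reticulation whose forced resolution (because $x$ and $x'$ are siblings in the modified tree) leaves the original display/cluster question unchanged. Repeated applications --- targeted at each violation of (R1), (R2), (R3), time-consistency, and tree-sibling in turn --- repair all the structural conditions while provably preserving the answer. If you want to salvage your approach, the honest advice is that you would essentially be re-proving Kanj et al.'s theorem under three extra simultaneous constraints, which is strictly harder than what the paper does; the black-box reduction from the general problem is the right tool here.
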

\begin{proof}
We reduce from the \textsc{Tree Containment} and \textsc{Cluster Containment} problems on general networks, which were shown to be NP-complete by Kanj et al.~\cite{KanjEtAl2008}.

Let~$\mathcal{X}$ be a set of taxa, and let~$N$ and~$T$ be a network and tree on~$\mathcal{X}$, respectively. We will modify~$N$ to a tree-sibling, time-consistent, regular network~$N'$ on a set of taxa~$\mathcal{X}'\supseteq\mathcal{X}$ and show that a cluster~$C\subset\mathcal{X}$ is a softwired cluster of~$N'$ if and only if it is a softwired cluster of~$N$. Since this modification can be carried out in polynomial time and the size of~$N'$ is polynomial in the size of~$N$, this shows NP-hardness of \textsc{Cluster Containment} on tree-sibling, time-consistent, regular networks. Furthermore, we will (in polynomial time) modify~$T$ to a tree~$T'$ on~$\mathcal{X}'$ and show that~$N'$ displays~$T'$ if and only if~$N$ displays~$T$, thus showing NP-hardness of \textsc{Tree Containment} on this restricted class of networks.

The construction of~$N'$ and~$T'$ relies on repeatedly applying the following operation. For a vertex~$v$ of~$N$, define \textsc{HangLeaves}$(v)$ as making the following changes to~$N$,~$T$ and~$\mathcal{X}$. We add two new taxa~$x,x'$ to~$\mathcal{X}$. Let~$r$ be the root of~$N$. We add to~$N$ leaves~$x,x'$, a new root~$r'$, an internal vertex~$p$ and edges~$(r',r),(r',p),(p,x'),(p,x)$ and~$(v,x)$. Let~$r_T$ be the root of~$T$. We add to~$T$ the leaves~$x,x'$, a new root~$r_T'$, an internal vertex~$p_T$ and edges~$(r_T',r_T),(r_T',p_T),(p_T,x)$ and~$(p_T,x')$. See Figure~\ref{fig:modification}.

\begin{figure}[ht]\centering
\includegraphics[scale=.5]{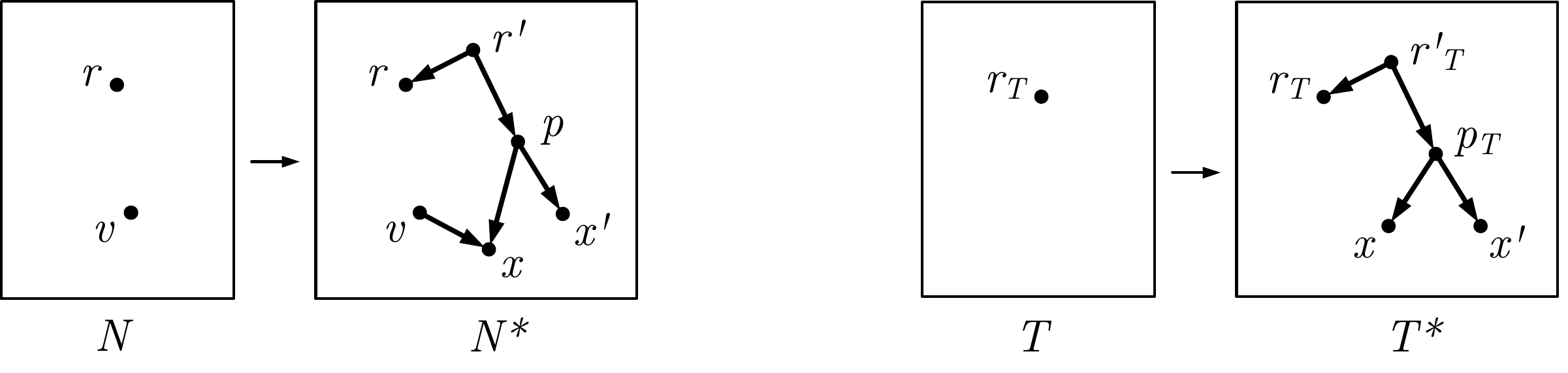}
\caption{How \textsc{HangLeaves}$(v)$ modifies~$N$ and~$T$. Vertices~$r$ and~$r_T$ are the roots of~$N$ and~$T$ respectively.}\label{fig:modification}
\end{figure}

We now describe how we transform~$N$ into a tree-sibling, time-consistent, regular network by repeated applications of \textsc{HangLeaves}. First, we make the network regular by doing the following for each pair~$u,v$ of distinct vertices of~$N$.
\begin{enumerate}
\item If~$cl(u)=cl(v)$, then apply \textsc{HangLeaves}$(u)$ and \textsc{HangLeaves}$(v)$.
\item If~$cl(u)\supset cl(v)$ but there is no directed path from~$u$ to~$v$, then apply \textsc{HangLeaves}$(v)$.
\item If there exist two distinct directed paths from~$u$ to~$v$, one of which is an edge, then subdivide this edge with a single vertex~$w$ and apply \textsc{HangLeaves}$(w)$.
\end{enumerate}
For any two vertices~$u,v$ of~$N$, operations~1,~2 and~3 make sure that properties~(R1),~(R2) and~(R3) (respectively) of a regular network are satisfied. Furthermore, by the definition of \textsc{HangLeaves}, these properties are also satisfied for newly added vertices. It follows that the obtained network is regular. Call this network~$N_r$.

The next step is to make the network time-consistent. For each reticulation-edge $(u,v)$ of~$N_r$, subdivide $(u,v)$ by a new vertex~$w$ and apply \textsc{HangLeaves}$(w)$. Let~$N_{rtc}$ be the resulting phylogenetic network.

We claim that~$N_{rtc}$ is time-consistent. Let~$t:V(N_r)\rightarrow\mathbb{N}$ be a labeling of the vertices of~$N_r$ such that ${t(u)<t(v)}$ for each edge $(u,v)$. This is possible because~$N_r$ is acyclic. We specify a label~$t'(v)$ (the time-stamp) for each vertex~$v$ of~$N_{rtc}$ as follows. Each vertex of~$N_{rtc}$ that is also a vertex of~$N_r$ gets the same label as in~$N_r$, i.e. $t'(v)=t(v)$ for all~$v\in V(N_r)$. Now consider a reticulation-edge $(u,v)$ of~$N_r$. Such an edge corresponds to two edges $(u,w)$ and $(w,v)$ of~$N_{rtc}$. Now label vertex~$w$ the same as vertex~$v$, i.e. $t'(w)=t'(v)=t(v)$. Observe that $(w,v)$ is a reticulation-edge and now satisfies restriction (TC2) of time-consistency. Furthermore, we have $t'(u)<t'(w)$ and so $(u,w)$, which is a tree-edge, satisfies restriction (TC1) of time-consistency. It remains to label the vertices that have been added by \textsc{HangLeaves}. This can easily be done in such a way that the restrictions of time-consistency are satisfied. Namely, we give~$x$ and~$p$ the same label as~$v$, give~$r'$ any label that's smaller than $t'(r)$ and~$x'$ any label that is greater than $t'(p)$ (processing vertices in the same order as in which they have been added by applications of \textsc{HangLeaves}).

Finally, we make the network tree-sibling. For each reticulation~$r$ of~$N_{rtc}$, that has not been added by \textsc{HangLeaves}, we do the following. Observe that, as a result of the modifications that made~$N_{rtc}$ time-consistent,~$r$ has two siblings, both of which are reticulations added by two different applications of \textsc{HangLeaves}. Pick any of the two siblings and call it~$x$. Let~$v$ be the common parent of~$x$ and~$r$. Subdivide edge~$(v,x)$ by a new vertex~$w$ and apply \textsc{HangLeaves}$(w)$. Reticulation~$r$ now has a sibling that is a tree-vertex, namely~$w$. Moreover, all reticulations~$x$ added by applications of \textsc{HangLeaves} have a sibling~$x'$ that is a tree-vertex. Hence, the resulting network is tree-sibling. Let~$N'$ be this resulting network and~$T'$ the resulting tree.

We claim that~$N'$ is not only tree-sibling, but also still regular and time-consistent. To see that~$N'$ is regular, observe that it has been obtained from the regular network~$N_r$ by repeatedly subdividing an edge by a new vertex~$w$ and applying \textsc{HangLeaves}$(w)$. It can easily be checked that a regular network remains regular after such a modification. To see that~$N'$ is also time-consistent, observe that it has been constructed from time-consistent network~$N_{rtc}$ by repeatedly subdividing an edge~$(v,x)$ by a new vertex~$w$ and applying \textsc{HangLeaves}$(w)$. Using that~$x$ is a leaf added by \textsc{HangLeaves}, it can easily be checked that a time-consistent network remains time-consistent after such a modification. Thus,~$N'$ is a tree-sibling, time-consistent, regular network.

It remains to show that (i)~a cluster~$C\subset\mathcal{X}$ is a softwired cluster of~$N'$ if and only if it is a softwired cluster of~$N$ and (ii)~$N'$ displays~$T'$ if and only if~$N$ displays~$T$. The crux to showing these things is that~$N'$ and~$T'$ have been obtained from~$N$ and~$T$ by subdividing edges and applying \textsc{HangLeaves}. By this observation,~(i) is clear.

To see~(ii), let~$N^*$, $T^*$ and~$\mathcal{X}^*$ be the result of a single application of \textsc{HangLeaves}$(v)$ to $N,T,\mathcal{X}$. We claim that~$N^*$ displays~$T^*$ if and only if~$N$ displays~$T$. First note that an embedding of~$T$ in~$N$ can easily be extended to an embedding of~$T^*$ in~$N^*$ by adding to the embedding all new vertices, a path from~$r'$ to the root of the embedding and edges $(r',p)$, $(p,x')$, $(p,x)$. Now consider an embedding of~$T^*$ in~$N^*$. Since~$x$ and~$x'$ are siblings in~$T^*$, this embedding necessarily contains the newly added vertices and edges except for the edge $(v,x)$. Thus, the restriction of the embedding of~$T^*$ in~$N^*$ to an embedding of~$T$ in~$N^*$ does not contain any of the newly added vertices and edges and is thus an embedding of~$T$ in~$N$. Thus,~$N^*$ displays~$T^*$ if and only if~$N$ displays~$T$. By recursively applying this argument, it follows that~$N'$ displays~$T'$ if and only if~$N$ displays~$T$.
\qed
\end{proof}

\section{Open Problem}\label{sec:open}
For a vertex~$v$ and a leaf~$x$ of some phylogenetic network, we say that~$v$ is a \emph{stable ancestor} of~$x$ if all directed paths from the root to~$x$ pass through~$v$. A network is said to be \emph{reticulation-visible} if each reticulation is a stable ancestor of some leaf. Recently, it was shown that \textsc{Cluster Containment} is polynomial-time solvable for reticulation-visible networks~\cite{HusonEtAl2010}. This class of networks contains, but is more general than, the class of tree-child networks. Thus, the tantalizing question remaining open after this work is whether \textsc{Tree Containment} is also polynomial-time solvable for reticulation-visible networks.

Note that \textsc{Tree Containment} cannot simply be solved by checking if each cluster of the input tree~$T$ is a softwired cluster of the input network~$N$ (using an algorithm for \textsc{Cluster Containment}). This approach fails because, even if all clusters of~$T$ are softwired clusters of~$N$, and even if~$N$ is reticulation-visible, it might be that~$N$ does not display~$T$, see~\cite{vanIerselKelk2010}. Thus, there is no obvious reduction from \textsc{Cluster Containment} to \textsc{Tree Containment} or vice versa.

\end{document}